\definecolor{refs}{rgb}{0.7,0,0}
\definecolor{ext}{RGB}{112,112,112}
\definecolor{cite}{RGB}{034,113,179}
\newtheorem{theorem}{Theorem}[section]
\newtheorem{proposition}[theorem]{Proposition}
\theoremstyle{definition}
\newtheorem*{definition}{Definition}
\newcommand{\R}{\mathbb{R}}
\newcommand{\const}{\mathrm{const}}
\newcommand{\spn}{\operatorname{span}}
\newcommand{\BB}{\mathcal{B}}
\newcommand{\MM}{\mathcal{M}}
\newcommand{\FF}{\mathcal{F}}
\newcommand{\TT}{\mathcal{T}}
\title{Deformations of dispersionless Lax systems}
\author{Wojciech Kry\'nski}
\thanks{Partially supported by the grant 2019/34/E/ST1/00188 from the National Science Centre, Poland.}
\address{
Institute of Mathematics, Polish Academy of Sciences, ul. \'Sniadeckich 8, 00-656 Warszawa, Poland}
\email{krynski@impan.pl}
\begin{document}

\begin{abstract}
 We consider dispersionless Lax systems and present a new systematic method of deriving new integrable systems from a given one. We provide examples that include: the dispersionless Hirota equation, the general heavenly equation and the web equations. 
\end{abstract}

\maketitle

\section{Introduction}
In this paper we consider dispersionless integrable systems arising as the integrability condition for a foliation $\FF$ on a bundle $\BB$ with 1-dimensional fibers over a manifold $\MM$. This framework can be adapted to the heavenly equations \cite{Pl,Sch}, the Manakov--Santini system \cite{MS} (see also \cite{DFK}), the Dunajski--Tod equation \cite{DT} (see also \cite{B}), the hyper-CR equation \cite{D1}, the dispersionless Hirota equation \cite{DK} (known also as the abc-equation \cite{Z}), equations related to the $GL(2)$-structures and web geometry \cite{FK2,KP,Kwebs,Kpleb,Khirota,KMet} and many others \cite{C,FK1,KSS,KMakh,MW,MAS}. The quotient space $\TT=\BB/\FF$ is usually referred to as the (real) twistor space and the following double fibration picture arises
\[
\MM\leftarrow \BB\rightarrow \TT.
\]
Our goal is to develop a method of deriving new integrable system  on $\MM$ by equipping the twistor space $\TT$ with additional geometric data.

The twistor methods originate from works of Penorese \cite{P} and Hitchin \cite{H}. All aforementioned systems describe very natural classes of geometric structures on $\MM$. Specifically, these could be anti-self-dual metrics (e.g \cite{B,DT,DFK,Pl}),  Einstein-Weyl structures (e.g. \cite{DFK,DK,FK1,Khirota,MS}), or higher dimensional counterparts like $GL(2)$-structures or Veronese and Kronecker webs (see \cite{FK2,KP,Kwebs,Kpleb,KMet}). However, it is often a subtle and not trivial task to prove generality of the solutions. Moreover, as \cite{KP,MP} shows it is even more difficult to prove whether two equations describe the same set of structures, because the corresponding equations are often non equivalent from the viewpoint of equivalence of PDEs. On the other hand, our method exploits the twistorial picture and automatically gives a correspondence between solutions of different equations. The idea was initially developed in \cite{Khirota} and applied in the specific case of the dispersionless Hirota equation. In the present paper we aim to generalize it and, as an example, provide new applications to the heavenly equations (we refer also to \cite{BFKN, BFKN2, KSS,KM} and \cite{PS}, where another approaches are presented) and the hierarchy of \cite{Kodes} (note that in dimension 4 the structures arise in the context of exotic holonomy groups \cite{Bryant, KMet}).

Different types of deformations of Lax systems were considered before, for instance in \cite{KM} and \cite{KP,PS}. In particular \cite{KM} uses algebraic approach to symmetry algebras, while \cite{KP,PS} exploit normal forms of Nijenhuis operators. A priori the two methods are unrelated, but in the present paper we generalize both of them at the same time. Our main goal, however, is to present an unified framework that can be later used in other context. 

\subsection*{Acknowledgments} I am grateful to Boris Kruglikov for helpful conversations.

\section{General construction}
In this Section we introduce a framework for our studies. Our aim is to define a notion of a deformation of a Lax system. At the end of the Section we formulate Theorem \ref{thm1} that states that the solution space of a deformed equation is in a one to one correspondence with the solution space of the original equation. Once the definitions are introduced, the result is straightforward. However, for applications it would be crucial to determine whether two equations are mutual deformations, or to construct deformations of a given equation. That, in general requires some work, but once the work is done Theorem \ref{thm1} can be applied.   

The present Section ends with additional remarks on the B\"acklund transformations and on a generalization of the construction involving higher order jets.
\subsection{Equations}
Let
\[
\tau\colon U\to \MM
\]
be a vector bundle over a manifold $\MM$ and let
\[
\pi_\MM\colon \BB\to \MM
\]
be a rank-1 fiber bundle over $\MM$. Denote by
\[
\hat\tau\colon\hat U\to\BB
\]
the pullback bundle of $U$. Then, for any section $u$ of $U$, there is a corresponding pullback section $\hat u$ of $\hat U$.
Moreover, let $L_0,\ldots, L_s$ be differential operators acting on sections of $\hat U$ with values in the tangent bundle $T\BB$,
\[
L_i\in \mathrm{Diff}(\hat U,T\BB).
\]
Hence, for a given a section $u$ of $U$, $L_0(\hat u),\ldots, L_s(\hat u)$ are vector fields on $\BB$ depending on $u$ and its derivatives up to certain order. 
\begin{definition}
We say that $u$ is a \emph{solution to the Lax system} defined by $(L_i)_{i=0,\ldots,s}$ if the distribution
\[
D_u=\spn\{L_0(\hat u),\ldots,L_s(\hat u)\}
\]
on $\BB$ is integrable.
\end{definition}

Later on, for simplicity, we shall write $L_i(u)$ instead of $L_i(\hat u)$. The Lax systems are usually defined by a pair $(L_0,L_1)$ of vector fields. The foliation of integral leaves of $D_u$ will be denoted $\FF_u$. For the rest of the paper we shall assume that $U$ and $\BB$ are trivial bundles  $U=\MM\times \R^m$ and $\BB=\MM\times\R P^1$. Moreover, we shall study $\MM$ locally around a given point and hence we will  assume for simplicity that $\MM=\R^n$. The later identification gives the initial coordinates on $\MM$, which will be deformed in the due course.

\subsection{Twistor correspondence}
Let $\TT_u=\BB/\FF_u$ be the space of leaves of foliation $\FF_u$ with the quotient mapping
\[
\pi_{\TT_u}\colon \BB\to \TT_u.
\]

We shall assume that fibers of $\pi_\MM$ intersect fibers of $\pi_{\TT_u}$ transversally, i.e. fibers of $\pi_\MM$ are nowhere tangent to the fibers of $\pi_{\TT_u}$. It follows that one can establish a correspondence between points in $\MM$ and certain curves in $\TT_u$, and conversely a correspondence between points in $\TT_u$ and submanifolds in $\MM$. Indeed, let $x\in \MM$ and denote by $\hat\gamma_x$ a curve in $\BB$ being a counterimage of $x$ with respect to $\pi_\MM$. Let $\gamma_{x,u}=\pi_{\TT_u}(\hat\gamma_x)$. Then $\gamma_{x,u}$ is a curve in $\TT_u$. 
Conversely, let $p\in \TT_u$ and denote by $\hat F_p$ a submanifold in $\BB$ being a counterimage of $p$ with respect to $\pi_{\TT_u}$. Let $F_p=\pi_\MM(\hat F_p)$. Then $F_p$ is a submanifold of $\MM$. For a given solution $u$ we shall denote
\[
\Gamma_u=\{\gamma_{x,u}\ |\ x\in\MM\}.
\] 

Our aim now is to introduce a class of distinguished coordinates on $\MM$ parameterized by solutions $u$ of a Lax system. It is sufficient to find a map from $\Gamma_u$ to $\R^n$ for any solution $u$, because there is a correspondence between points in $\MM$ and $\Gamma_u$. Later on, the codomain $\R^n$ will be assumed to be a fixed space $\tilde\MM=\R^n$.
\begin{definition}
A sequence of corank-1 submanifolds $(T_1, T_2,\ldots,T_m)$ of $\TT_u$ is called a \emph{transversal system} for a family $\Gamma_u$  if any $\gamma\in\Gamma_u$ and $T_i$, $i=1,\ldots,m$, intersect transversally exactly at one point. 
\end{definition}

\begin{definition}
Let $T_u=(T_1,\ldots,T_m)$ be a transversal system of submanifolds of $\TT_u$ for a family $\Gamma_u$ and assume that any $T_l$ possess a fixed local coordinate system $(x_l^1,\ldots,x_l^k)\colon T_l\to\R^k$. Any function $x_l^j$ defines a function on $\Gamma_u$ by a formula
\[
x_l^j(\gamma)=x_l^j(\gamma\cap T_l)
\]
where the right hand side is the value of $x_l^j$ at the intersection point of $\gamma$ and $T_l$ (unique by definition). We say that maps $(x_l^j)$ \emph{induce coordinates} on $\Gamma_u$ if there is subset $(x^1,\ldots,x^n)\subset\{x_l^j\ |\ l=1,\ldots,m,\ j=1,\ldots,k\}$ defining local coordinates on $\Gamma_u$. If this is the case then $\phi_u=(x^1,\ldots,x^n)\colon \Gamma_u\to \R^n$ are called \emph{induced coordinates} on $\MM$.
\end{definition}

\begin{figure}
  \includegraphics[height=180pt]{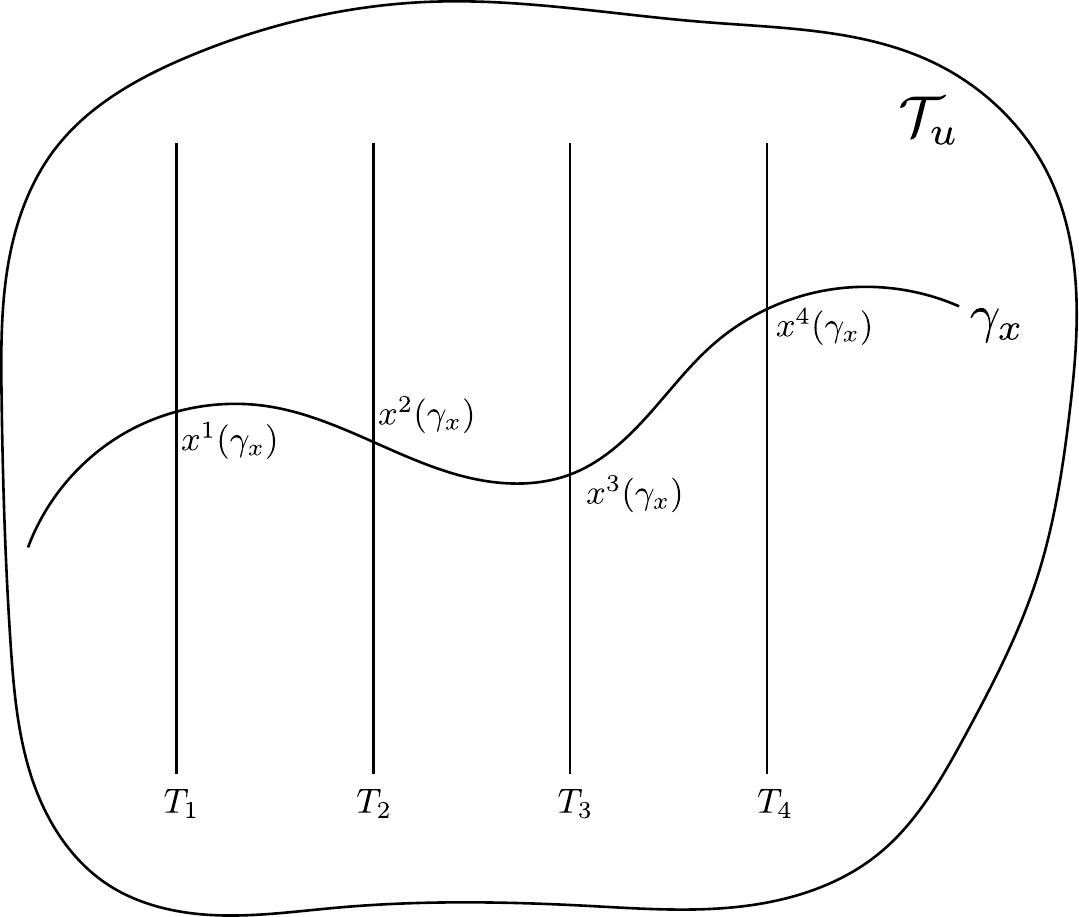}.
  \caption{Transversal system $T_u=(T_1,T_2,T_3,T_4)$ and induced coordinates $(x^i)$ of point $x$ represented by a curve $\gamma_x$ in the twistor space $\TT_u$.}
  \label{fig_transversal1}
\end{figure}

The coordinate functions one gets from the construction are very special. Indeed, we have the following.
\begin{proposition}
If functions $x^i$, $i=1,\ldots,n$, are coordinates on $\MM$ induced by a transversal system on $T_u$, then any foliation $x^i=\const$ on $\MM$ is tangent to the projection of certain integral leaves of distribution $D_u$ via $\pi_\MM$.
\end{proposition}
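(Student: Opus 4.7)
The plan is to unravel the definitions and show that each leaf of the foliation $\{x^i=\const\}$ is actually a union of submanifolds of the form $F_p$, which by construction are projections of leaves of $\FF_u$, i.e.\ integral leaves of $D_u$.

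First, I would fix $i$ and pick the transversal $T_l$ and the coordinate function $x_l^j$ with $x^i=x_l^j$. By definition, for any $y\in\MM$ with associated curve $\gamma_y\in\Gamma_u$, the value $x^i(y)$ equals $x_l^j$ evaluated at the unique intersection point $\gamma_y\cap T_l$. Hence
\[
\{y\in \MM\ |\ x^i(y)=c\}=\{y\in\MM\ |\ \gamma_y\cap T_l\in T_l^c\},
\]
where $T_l^c=\{p\in T_l\ |\ x_l^j(p)=c\}$ is a corank-$1$ submanifold of $T_l$ (and hence corank-$2$ in $\TT_u$).

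Next, I would observe that for $p\in T_l$ the condition $\gamma_y\cap T_l=p$ is equivalent to $\hat\gamma_y$ meeting the leaf $\hat F_p=\pi_{\TT_u}^{-1}(p)$, which by projecting via $\pi_\MM$ is the same as $y\in F_p=\pi_\MM(\hat F_p)$. Therefore
\[
\{y\in \MM\ |\ x^i(y)=c\}=\bigcup_{p\in T_l^c}F_p.
\]
Since each $\hat F_p$ is an integral leaf of $D_u$ by the very definition of $\TT_u$, each $F_p$ is the projection of such an integral leaf via $\pi_\MM$. Thus every leaf of the foliation $\{x^i=\const\}$ is a union of the $F_p$'s, which gives the claimed tangency.

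The only point that needs minor care, and which I view as the main (mild) obstacle, is to check the dimensions match up so that the union really foliates the level set and is not, say, a lower-dimensional subset. The transversality assumption that fibers of $\pi_\MM$ are nowhere tangent to fibers of $\pi_{\TT_u}$ ensures that $\pi_\MM|_{\hat F_p}$ is an immersion, so $\dim F_p=s+1$; combined with $\dim T_l^c=\dim\TT_u-2=n-s-2$, one recovers the expected dimension $n-1$ of a level set of $x^i$. With this verified, the equality of sets displayed above gives the tangency of the foliation $\{x^i=\const\}$ to the family $\{F_p\}_{p\in T_l^c}$ of projected integral leaves of $D_u$, which is the content of the proposition.
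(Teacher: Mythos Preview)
Your argument is correct and follows essentially the same approach as the paper: identify the level set $\{x^i=c\}$ with the union $\bigcup_{p\in T_l^c}F_p$ over the corank-$1$ slice $T_l^c\subset T_l$, and observe that each $F_p$ is by construction the $\pi_\MM$-projection of an integral leaf of $D_u$. The paper's proof is simply a terser version of yours, omitting the dimension count you include at the end.
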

\begin{proof}
Indeed, a condition $x^i=c$ for some $c\in\R$ fixes a submanifold $S\subset T_s$, where $T_s$ is one of the submanifolds from the transversal system $T_u$. Points $p\in S$ correspond to submanifolds $F_p\subset\MM$ that consist of points represented by curves from $\Gamma_u$ intersecting $S$. On the other hand this submanifolds are, by definition, projections to $\MM$ of integral submanifolds of $D_u$.
\end{proof}

\subsection{Deformations of Lax systems}
Let $u$ be a solution to a Lax system $(L_i)_{i=0,\ldots,s}$ on $\MM$. Let $T_u$ be a transversal system in the corresponding twistor space $\TT_u$ and $\phi_u\colon\MM\to\R^n$ be an induced system of coordinates. From now on we 
assume that all maps $\phi_u$ take values in one fixed copy of $\R^n$ denoted by $\tilde\MM$. To be more precise, we assume that for any solution $u$, and any transversal system $T_u=(T_1,\ldots,T_m)$ the coordinate functions $(x_l^j)_{j=1,\ldots,k}\colon T_l\to \R^k$ take values in a fixed copy of $\R^k$, denoted $\tilde\MM_l$, and $\tilde \MM$ is defined as an image of certain projection from $\tilde\MM_1\times\cdots\times\tilde\MM_k$ to a subspace defined by the choice of $(x^i)$ among $(x_l^j)$.

Having $\tilde\MM$ fixed, we may consider Lax systems $(\tilde L_{i,u})_{i=0,\ldots,s}$ on $\tilde \MM$ defined as pullbacks of the original system through $\phi_u$, for different $u$. These Lax systems are clearly equivalent to the original system and are only written in different coordinate charts. In order to get a significantly new system we shall combine all the systems together. For this we introduce the following definition.

\begin{definition}
A Lax system $(\tilde L_i)_{i=0,\ldots,s}$ on $\tilde\MM$ is a \emph{week deformation} defined by a family of transversal systems $T_u$ and induced coordinates $\phi_u$ of a Lax system  $(L_i)_{i=0,\ldots,s}$ if for any solution $u$ of $(L_i)_{i=0,\ldots,s}$ its pullback
\[
u_*:=u\circ\phi^{-1}_u
\]
 to $\tilde\MM$ via the corresponding $\phi_u$ is a solutions to  $(\tilde L_i)_{i=0,\ldots,s}$. 
\end{definition}

\begin{definition}
A  Lax system $(\tilde L_i)_{i=0,\ldots,s}$ on $\tilde\MM$ is a \emph{deformation} of a Lax system $(L_i)_{i=0,\ldots,s}$ on $\MM$ if the two systems are mutual week deformations of each other and
\[
(u_*)_*=u
\]
holds.
\end{definition}

We get the following tautological result.
\begin{theorem}\label{thm1}
If a Lax system $(\tilde L_i)_{i=0,\ldots,s}$ on $\tilde\MM$ is a deformation of a Lax system on $(L_i)_{i=0,\ldots,s}$ then there is a one to one correspondence between the two solution spaces.
\end{theorem}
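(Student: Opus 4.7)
The proof is essentially a formal consequence of unpacking the definitions, so my plan is to exhibit the two maps between the solution spaces and verify that they are mutual inverses.

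First, I would define the map
\[
\Phi\colon \{u \text{ solution of } (L_i)\} \to \{\tilde u \text{ solution of } (\tilde L_i)\},\qquad \Phi(u)=u_*=u\circ \phi_u^{-1}.
\]
The fact that $\Phi$ lands in the target solution space is exactly the content of $(\tilde L_i)$ being a weak deformation of $(L_i)$: for any solution $u$, the pullback $u_*$ is by hypothesis a solution of $(\tilde L_i)$ on $\tilde\MM$. So no new work is required for this half, beyond quoting the definition of weak deformation.

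Next, since $(L_i)$ and $(\tilde L_i)$ are mutual weak deformations, I would symmetrically define
\[
\Psi\colon \{\tilde u \text{ solution of } (\tilde L_i)\} \to \{u \text{ solution of } (L_i)\},\qquad \Psi(\tilde u)=\tilde u_* = \tilde u\circ \phi_{\tilde u}^{-1},
\]
which is well defined by the weak deformation property in the reverse direction. At this point $\Phi$ and $\Psi$ are maps between the two solution spaces, but I still need to see that they are inverse to each other.

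The defining identity $(u_*)_*=u$ for a deformation gives $\Psi\circ \Phi=\mathrm{id}$ on solutions of $(L_i)$, and by symmetry of the notion of deformation the analogous identity $(\tilde u_*)_*=\tilde u$ yields $\Phi\circ \Psi=\mathrm{id}$ on solutions of $(\tilde L_i)$. Hence $\Phi$ is a bijection with inverse $\Psi$, which is the claim of the theorem. The only point that might momentarily look like an obstacle is the symmetric involution property on the deformed side, but this is built into the very definition of \emph{deformation} (as opposed to weak deformation), so nothing beyond the hypotheses is needed.
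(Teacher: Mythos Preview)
Your proposal is correct and matches the paper's treatment: the paper gives no proof at all, labeling the result ``tautological,'' and your argument is precisely the unpacking of the definitions that makes the tautology explicit. Your flagging of the symmetric involution property $(\tilde u_*)_*=\tilde u$ is apt---the paper's definition literally writes only $(u_*)_*=u$, but the evident intent (and the only reading under which the theorem is genuinely tautological) is that the relation of deformation is symmetric, which is how you use it.
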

\begin{figure}
  \includegraphics[height=150pt]{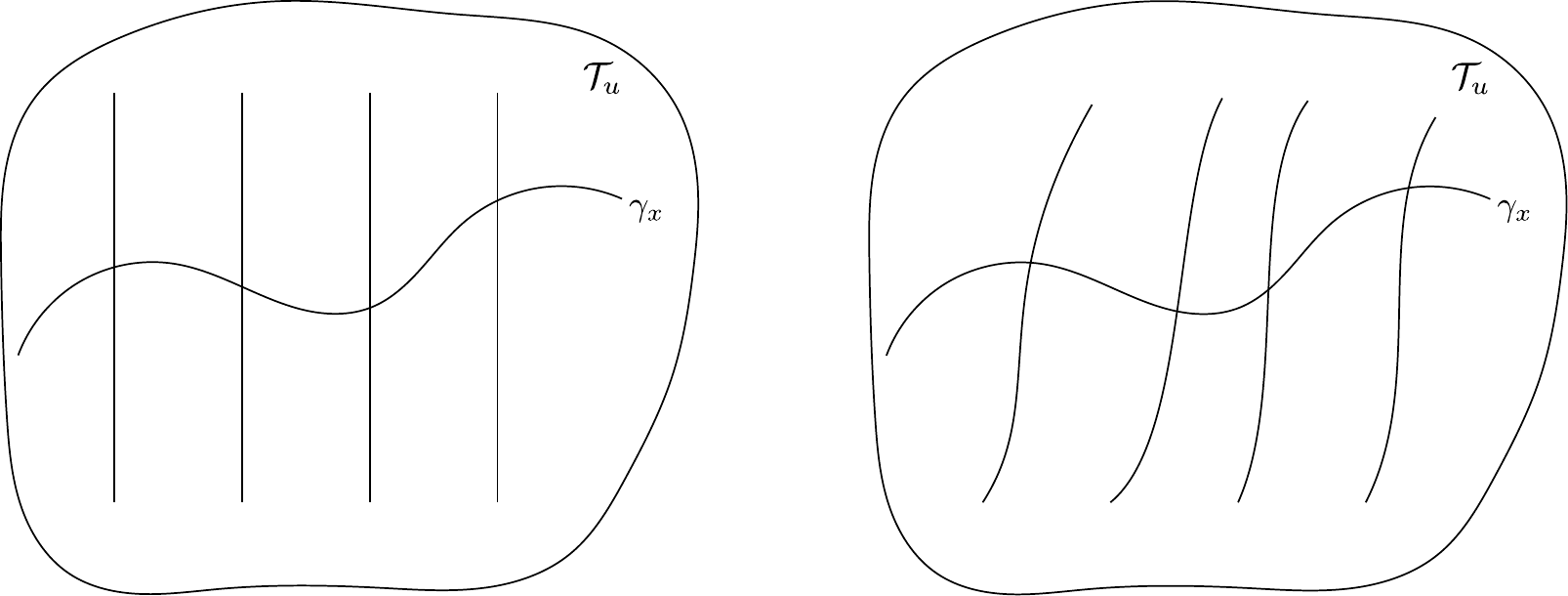}.
  \caption{Two transversal systems on the twistor space $\TT_u$, giving two different coordinate systems on $\MM$.}
  \label{fig_transversal1}
\end{figure}

In the next Section we shall provide a number of examples, but we shall make two general remarks first: concerning the B\"acklund transformations and higher order generalizations.

\subsection{B\"acklund transformations}
The correspondence of Theorem \ref{thm1} does not come in general from contact, or point, transformations of coordinates. Also, in general, it is not a B\"acklund transformation. However, if one assumes that $\MM=\tilde\MM$ then a B\"acklund transformation can be (in principle) found by the following condition
\begin{equation}\label{eq_backlund}
D_u=\tilde D_{\tilde u}
\end{equation}
where $D_u$ and $\tilde D_{\tilde u}$ are distirbutions spanned by the Lax systems $(L_i(u))_{i=0,\ldots,s}$ and $(\tilde L_i(\tilde u))_{i=0,\ldots,s}$ for solutions $u$ and $\tilde u$ respectively.

\subsection{Higher order deformations}
The higher order deformation can be defined by replacing some of $T_i$ in a transversal system by their tangent bundles (or higher order tangent bundles). This can be visualized as a limit, when two (or more) submanifolds $T_i$ in a transversal system are getting arbitrary close. Then, one can introduce coordinates of a curve $\gamma_x$ as a first jet (or higher) at the intersection point of $\gamma_x$ and $T_i$. This approach was applied in the case of the Hirota equation in \cite{Khirota}. We shall not explain details here, but postpone it to separate studies (equations that can be treated in this way include the hyper-CR equation, the Manakov--Santini system and the system that governs the half-integrable Cayley structures introduced in \cite{KMakh}).

\section{Examples}
This section contains a number of examples illustrating general constructions of the previous section. A common feature of all of them is that the twistor space is fibered over a projective space. On the level of Lax pairs it manifests with the lack of a term in the direction of $\partial_\lambda$. In other words, the fiber coordinate $\lambda$ on the bundle $\BB$ descends to a well defined function on $\TT_u$.
\subsection{Dispersionless Hirota equation}
It is proved in  \cite{DK} that the solutions to the dispersionless Hirota equation are in a one to one correspondence with the 3-dimensional hyper-CR Einstein--Weyl structures. The equation, written on manifold $\MM=\R^3$ with coordinates $x^1,x^2,x^3$, reads
\begin{equation}\label{eq_hirota}
au_1u_{23}+bu_2u_{13}+cu_3u_{12}=0,
\end{equation}
where $a,b,c\in\R$ are non-zero constants such that $a+b+c=0$, and $u_i=\partial_iu$. The corresponding Lax pair is of the following form (see \cite{DK})
\begin{equation}\label{eq_L1}
L_0=\partial_3-\frac{u_3}{u_1}\partial_1-\lambda b\partial_3,\qquad L_1=\partial_2-\frac{u_2}{u_1}\partial_1+\lambda c\partial_2,
\end{equation}
where $\lambda$ is an additional affine coordinate on the rank-1 bundle $\BB=\MM\times\R P^1$.
Note that for any value $\lambda$, the two vector fields $L_0$ and $L_1$ span an integrable distribution on $\MM$. Indeed, the Hirota equation is the integrability condition. The corresponding conformal metric $[g]$ and the Weyl connection can be found in \cite{DK} (formula (4)).

Let $\lambda_1, \lambda_2,\lambda_3$ be such that $a=\lambda_2-\lambda_3$, $b=\lambda_3-\lambda_1$ and $c=\lambda_1-\lambda_2$. Performing a M\"obius transformation $\lambda\mapsto\frac{1}{\lambda_1-\lambda}$ one puts the Lax pair \eqref{eq_L1} in the form
\begin{equation}\label{eq_L2}
L_0=(\lambda-\lambda_1)\frac{u_3}{u_1}\partial_1-(\lambda-\lambda_3)\partial_3,\qquad L_1=(\lambda-\lambda_1)\frac{u_2}{u_1}\partial_1-(\lambda-\lambda_2)\partial_2,
\end{equation}
which clearly gives the same Lax equation for $u$. One checks that for $\lambda=\lambda_i$ the distribution spanned by $L_0$ and $L_1$  is tangent to the foliation $x^i=\const$.

It follows that coordinates $(x^1,x^2,x^3)$ can be interpret as induced coordinates from a transversal system. Indeed, $\TT_u=\BB/D_u$ is the space of all integral manifolds of $D_u=\spn\{L_0,L_1\}$. Moreover, $\TT_u$ is two-dimensional and, as mentioned before, $\lambda$ is a well defined function on $\TT$ because it is constant on leaves of $D_u$. The transversal system is defined by submanifolds
\[
T_i=\{p\in\TT_u\ |\ \lambda(p)=\lambda_i\}.
\]
Function $x^i$ is a diffeomorphism from $T_i$ to $\R$, i.e. it parameterizes leaves of the aforementioned foliation $x^i=\const$ on $\MM$. 

Let us replace now $T_i$ by general submanifolds of $\TT_u$ of the form
\[
\tilde T_i=\{p\in\TT_u\ |\ f_i(p)=0\},	
\]
for some functions $f_i\colon\TT_u\to\R$  on the twistor space (with $0$ being a regular value). It is proved in \cite{Khirota} that any choice of coordinates $x^i$ on $\tilde T_i$ transforms \eqref{eq_hirota} into 
\begin{equation}\label{eq_hirota2}
(\lambda_2(x^2)-\lambda_3(x^3))u_1u_{23}+(\lambda_3(x^3)-\lambda_1(x^1))u_2u_{13}+(\lambda_1(x^1)-\lambda_2(x^2))u_3u_{12}=0,
\end{equation}
where now each $\lambda_i(x^i)$, $i=1,2,3$, is a function of one variable.  Precisely, $\lambda_i$ is defined as the restriction of function $\lambda$ from $\TT_u$ to submanifold $\tilde T_i$, and $x^i$ is a coordinate function on $\tilde T_i$, which consequently becomes an induced coordinate on $\MM$. In the previous case this restriction of $\lambda$ to $T_i$ is just a constant function.

Equation \eqref{eq_hirota2} was derived for the first time in \cite{KP} where the authors analyzed the so-called Nijenhuis operators associated to Veronese webs. The general assumption in their approach is that the Nijenhuis operators have no singular points which is reflected in the condition  $\partial_{x^i}\lambda_i(x^i)\neq 0$. In this case a simple coordinate change (a point transformation) gives $\lambda_i(x^i)=x^i$. In our approach $\lambda_i(x^i)$ can be arbitrary smooth function of one variable. In particular we admit $\partial_{x^i}\lambda_i(x^i)=0$ for certain values of $x^i$.

In \cite{KP} it is also proved that equations with constant coefficients $\lambda_i$ are contactly non-equivalent to the one with non-constant $\lambda_i$ (one can check that the corresponding symmetry groups are different). Note that the corrdinate change between coordinate systems induced by different transversal systems does not establish a contact equivalence of the Lax systems. The reason is that the coordinate change depends on a given solution $u$. 

On the other hand, formula \eqref{eq_backlund} can be applied and it descents to a B\"acklund equivalence, which in the present case was found in \cite{KP} as
\[
\lambda_i\tilde\lambda_ju_i\tilde u_j=\lambda_j\tilde\lambda_iu_j\tilde u_i,\qquad i,j=1,2,3.
\] 
where $u$ and $\tilde u$ are correspondingly solutions to \eqref{eq_hirota2} or to a variant of \eqref{eq_hirota2} with functions $\lambda_i$ replaced by $\tilde\lambda_i$.

\subsection{Higher dimensional Veronese webs}
Veronese webs are higher dimensional counterparts of the hyper-CR Einstein--Weyl structures described by equation \eqref{eq_hirota}. We refer to \cite{DK} for the proof of the 3-dimensional correspondence between the Veronese webs and the Einstein--Weyl structures. The general case was studied in \cite{Kodes} where it is additionally proved that the webs underlie very particular integrable paraconformal structures (so called totally geodesic $GL(2)$-geometries).

The Veronese webs on $\MM=\R^n$ are 1-parameter families of corank-one foliations \cite{JK,Kwebs,Z} such that the annihilating 1-forms $\omega_\lambda$ give rise to a rational normal curves $\lambda\mapsto\R\omega_\lambda(x)\in P(T^*_x\MM)$ at each point $x\in\MM$. The curves replace cones of null directions of $[g]$ in the 3-dimensional case. It is proved in \cite{Kodes} that the Veronese webs are in a one to one correspondence with solutions to the following system
\begin{equation}\label{eq_hirotasys}
(\lambda_i-\lambda_j)u_ku_{ij}+(\lambda_k-\lambda_i)u_ju_{jk}+(\lambda_j-\lambda_k)u_ku_{ij}=0,
\end{equation}
where $i,j,k=1,\ldots,n$. Indeed, the equations are equivalent to the integrability condition
\[
d\omega_\lambda\wedge\omega_\lambda=0
\]
where $\omega_\lambda$ is given by
\begin{equation}\label{eq_omega}
\omega_\lambda=\sum_{i=1}^n\prod_{j\neq i}(\lambda-\lambda_j)u_idx^i.
\end{equation}

System \eqref{eq_hirotasys} appeared in \cite{DK} for the first time and turned out to be a Lax system with
\begin{equation}\label{eq_L3}
L_i=(\lambda-\lambda_1)\frac{u_i}{u_1}\partial_1-(\lambda-\lambda_i)\partial_i,\qquad i=1,\ldots,n.
\end{equation}
The corresponding twistor space is two dimensional and the most natural transversal system can be defined as before
\[
T_i=\{p\in\TT\ |\ \lambda(p)=\lambda_i\},
\]
where now $i=1,\ldots,n$. Note that, as in the 3 dimensional case, $\lambda$ is constant along vector fields $L_i$ independently of $u$, and therefore can be treated as a function on the twistor space  $\TT_u$.

Deforming $T_i$ as in the 3-dimensional case we get that the constants $\lambda_i$ can be replaced by arbitrary (smooth) functions of one variable $\lambda_i=\lambda_i(x_i)$. Indeed we have
\[
(\lambda_i(x^i)-\lambda_j(x^j))u_ku_{ij}+(\lambda_k(x^k)-\lambda_i(x^i))u_ju_{jk}+(\lambda_j(x^j)-\lambda_k(x^k))u_ku_{ij}=0,
\]
$i,j,k=1,\ldots,n$, as an integrability condition for $\omega_\lambda$ given by \eqref{eq_omega} with constants $\lambda_i$ replaced by functions. Note that, if $\partial_{x^i}\lambda_i\neq 0$ then coordinate change (a point transformation) reduces $\lambda_i(x^i)=x^i$.

Finally, applying \eqref{eq_backlund} we find B\"acklund transformations given by
\[
\lambda_i\tilde\lambda_ju_i\tilde u_j=\lambda_j\tilde\lambda_iu_j\tilde u_i,\qquad i,j=1,\ldots,n.
\] 
where $\lambda$ and $\tilde\lambda_i$ are functions as before.

\subsection{Heavenly equations}
In a recent paper \cite{KSS} an interesting approach to the heavenly equations is presented. In this context $\MM=\R^4$ and one looks for split-signature self-dual Ricci flat metrics. It is well known that metrics of this type are in a one to one correspondence with solutions to the (first) Pleba\'nski equation
\begin{equation}\label{eq_plebanski}
u_{13}u_{24}-u_{14}u_{23}=1
\end{equation}
which has a Lax pair on $\BB=\MM\times\R P^1$ of the following form
\[
L_0=-\partial_3+\lambda(u_{13}\partial_2-u_{23}\partial_1),\quad L_1=-\partial_4+\lambda(u_{14}\partial_2-u_{24}\partial_1)
\]
The authors of \cite{KSS} utilize the eigenfunctions as coordinates, where eigenfunctions are understood as solutions to linear system
\begin{equation}\label{eq_eigen}
L_i|_{\lambda=\lambda^*}\psi=0, \qquad i=0,1,
\end{equation}
for unknown function $\psi$, where $\lambda^*$ is a fixed value of $\lambda$. It follows that any solution $\psi$ is a function on a submanifold of $\TT_u$ defined as
\[
T_{\lambda^*}=\{p\in\TT_u\ |\ \lambda(p)=\lambda^*\},
\]
where as in the previous examples $\lambda$ is treated here as a function on $\TT_u$. Conversely, any function on $T_{\lambda^*}$ is a solution to \eqref{eq_eigen}. 

In the present case $\TT_u$ is 3-dimensional, and consequently $T_{\lambda^*}$ is a 2-dimensional surface in $\TT$. It follows that there are two functionally independent functions on each $T_{\lambda^*}$. \cite{KSS} uses the functions as coordinates on $\MM$. In our terminology these are induced coordinates for a transversal system. For the Pleba\'nski equation it is enough to take two distinct values of $\lambda^*$, say $\lambda_1$ and $\lambda_2$, and the corresponding  transversal system $T_u=(T_1,T_2)$. We get four functions $x^i_l$, $i,l=1,2$ in this way, as needed. 

However, one can take $T_u=(T_1,T_2,T_3, T_4)$ for four different values of $\lambda^*=\lambda_l$, $l=1,2,3,4$. Then coordinates on $T_l$ give  8 functions $x^i_l$ in total. We pick 4 of them, one for each $\lambda_l$. It is proved in \cite{KSS} (Theorem 4.1) that as a result one gets the general heavenly equation (see also \cite{Sch})
\begin{equation}\label{eq_heavenly}
(\lambda_1-\lambda_2)(\lambda_3-\lambda_4)u_{12}u_{34}+(\lambda_2-\lambda_3)(\lambda_1-\lambda_4)u_{23}u_{14}+(\lambda_3-\lambda_1)(\lambda_2-\lambda_4)u_{31}u_{24}=0.
\end{equation}
The corresponding Lax pair has the following form
\[
\begin{aligned}
&L_0=(\lambda-\lambda_1)(\lambda_2-\lambda_3)u_{23}\partial_1+(\lambda-\lambda_2)(\lambda_3-\lambda_1)u_{13}\partial_2+(\lambda-\lambda_3)(\lambda_1-\lambda_2)u_{12}\partial_3,\\
&L_1=(\lambda-\lambda_1)(\lambda_2-\lambda_4)u_{24}\partial_1+(\lambda-\lambda_2)(\lambda_4-\lambda_1)u_{14}\partial_2+(\lambda-\lambda_4)(\lambda_1-\lambda_2)u_{12}\partial_4.
\end{aligned}
\]

Now, the transversal system $T_u$ can be deformed to a quadruple $\tilde T_u=(\tilde T_1,\tilde T_2,\tilde T_3, \tilde T_4)$ of arbitrary surfaces (transversal with respect to the family $\Gamma_u$) in $\TT_u$. Similarly to the Hirota equation such a deformation replaces each $\lambda_i$ by a smooth function which in general can be written in the form
\begin{equation}\label{eq_generalL}
\lambda_i=\lambda_i(x_i^1,x_i^2)
\end{equation}
where $(x_i^1,x_i^2)$ are two coordinate functions on $\tilde T_i$. One of them, say $x_i^1$, becomes a coordinate on $\MM$ (denote it by $x^i$). Denote by $\phi^i$ the pullback of the second coordinate function (i.e. $x_i^2$) from $\tilde T_i$ to $\MM$. Then $\phi^i=\phi^i(x^1,x^2,x^3,x^4)$ is a general function satisfying \eqref{eq_eigen} for $\lambda^*=\lambda_i(x^1_i,x^2_i)$. 


The Lax pair in the new coordinate system reads
\[
\begin{aligned}
&\tilde L_0=(\lambda-\lambda_1)(\lambda_2-\lambda_3)\tilde u_{23}\partial_1+(\lambda-\lambda_2)(\lambda_3-\lambda_1)\tilde u_{13}\partial_2+(\lambda-\lambda_3)(\lambda_1-\lambda_2)\tilde u_{12}\partial_3,\\
&\tilde L_1=(\lambda-\lambda_1)(\lambda_2-\lambda_4)\tilde u_{24}\partial_1+(\lambda-\lambda_2)(\lambda_4-\lambda_1)\tilde u_{14}\partial_2+(\lambda-\lambda_4)(\lambda_1-\lambda_2)\tilde u_{12}\partial_4.
\end{aligned}
\]
where $\tilde u_{ij}$ are, a priori, certain expressions involving second order derivatives of the original $u$ and of the coordinate change. However, it turns out that $\tilde u_{ij}=\partial_i\partial_j\tilde u$ are second order derivatives (with respect to the new coordinates) of a function $\tilde u$ provided that new coordinates, as well as functions $\phi^i$, are defined as functions on $\tilde T_i$, or equivalently they are solutions to
\begin{equation}\label{eq_eigen2}
\tilde L_0|_{\lambda=\lambda_i(x^1,\phi^i)}\psi=0, \qquad \tilde L_1|_{\lambda=\lambda_i(x^1,\phi^i)}\psi=0,
\end{equation}
in new coordinates. The integrability condition for the new Lax pair gives \eqref{eq_heavenly} (with $\lambda_i$ upgraded to functions).  Notice, that there is one obvious solution to \eqref{eq_eigen2} witch is $\phi^i=\partial_i\tilde u=\tilde u_i$ (written in new coordinates). Summarizing, we get the following equation
\[
\begin{aligned}
&(\lambda_1(x^1,u_1)-\lambda_2(x^2,u_2))(\lambda_3(x^3,u_3)-\lambda_4(x^4,u_4))u_{12}u_{34}\\
&+(\lambda_2(x^2,u_2)-\lambda_3(x^3,u_3))(\lambda_1(x^1,u_1)-\lambda_4(x^1_4,u_4))u_{23}u_{14}\\
&+(\lambda_3(x^3,u_3)-\lambda_1(x^1,u_1))(\lambda_2(x^2,u_2)-\lambda_4(x^4,u_4))u_{31}u_{24}=0.
\end{aligned}
\]

Specific cases of this particular deformation appeared independently in \cite{KM} and \cite{PS}. In \cite{KM} authors consider deformations of the symmetry algebra and their algebraic procedure give our deformed equation but only with respect to $\lambda_4$ and coordinate $x^4$ (it is case $(I)$ of \cite{KM}). Function $Q$ in \cite{KM} is given by
\[
Q(x^4,u_4)=\frac{(\lambda_1-\lambda_3)(\lambda_2-\lambda_4(x^4,u_4))}{(\lambda_2-\lambda_3)(\lambda_1-\lambda_4(x^4,u_4))}.
\]
On the other hand authors of \cite{PS} apply approach of the Nijenhuis operators and get $\lambda_i=\lambda_i(x^i)$. This corresponds to the case when the foliation  $x^1_i=\const$ on $\tilde T_i$ coincide with the foliation $\lambda=\const$ on $\TT_u$ restricted to $\tilde T_i$. As in the previous examples our method is more general also in this case, since it covers singularities of the Nijenhuis operators - corresponding to points where $\partial_{x^i}\lambda_i(x^i)=0$.


\end{document}